\documentclass[preprint,11pt]{elsarticle}
\usepackage{amssymb}
\usepackage{natbib}
\usepackage[french,english]{babel}
\usepackage [vscale=0.76,includehead]{geometry} 
\usepackage{relsize}
\usepackage{graphicx,amsmath,fullpage,mathptmx,helvet}
\usepackage[utf8]{inputenc}
\usepackage[T1]{fontenc}
\usepackage{amsthm}
\usepackage{amsfonts} 
\usepackage{amsmath}
\usepackage{stmaryrd}
\usepackage{tikz}
\usepackage{tikz-qtree}
\usepackage{listings}
\usepackage{algorithm} 
\usepackage{algpseudocode} 
\usepackage{caption}
\usepackage{subcaption}
\usepackage{chngcntr}
\usepackage{url}
\usepackage{textcomp}
\usepackage{pgfplots}

\pgfplotsset{width=10cm,compat=1.9}

\theoremstyle{definition}
\newtheorem{definition}{Definition}[section]
\theoremstyle{plain}
\newtheorem{theorem}{Theorem}[section]
\newtheorem{lemma}[theorem]{Lemma}
\newtheorem{notation}[theorem]{Notation}
\newtheorem{property}[theorem]{Property}
\DeclareMathOperator*{\argmax}{arg\,max}


\begin{document}

\begin{frontmatter}

\title{Online bin stretching lower bounds: Improved search of computational proofs}

\author[inst1]{Antoine Lhomme}
\affiliation[inst1]{organization={Univ. Grenoble Alpes, G-SCOP, CNRS},
            city={Grenoble},
            postcode={38000}, 
            country={France}}
\author[inst1]{Olivier Romane}
\author[inst1]{Nicolas Catusse}
\author[inst1]{Nadia Brauner}

\begin{abstract}
Computing lower and upper bounds on the competitive ratio of online algorithms is a challenging question: For a minimization combinatorial problem, proving a competitive ratio for a given algorithm leads to an upper bound. However computing lower bounds requires a proof on all algorithms. This can be modeled as a 2-player game where a strategy for one of the players is a proof for the lower bound. The tree representing the proof can can be found computationally. This method has been used with success on the online bin stretching problem where a set of items must be packed online in $m$ bins. The items are guaranteed to fit into the $m$ bins. However, the online procedure might require to stretch the bins to a larger capacity in order to be able to pack all the items. This stretching factor is the objective to be minimized. 

We propose original ideas to strongly improve the speed of computer searches for lower bound: propagate the game states that can be pruned from the search and improve the speed and memory usage in the dynamic program which is used in the search. These improvements allowed to increase significantly the speed of the search and hence to prove new lower bounds for the bin stretching problem for 6, 7 and 8 bins. 
\end{abstract}

\end{frontmatter}


\section*{Introduction}
Online optimization represent a class of problems where all of the information is not known before constructing a solution; instead, the information is given gradually after some decisions have been made. This may force to make choices that turn out to be sub-optimal later on (see \citet{Borodin} and \citet{FiatAmos1998OATS} for more information on online algorithms). This paper tackles a specific online problem, \textit{online bin stretching}, which may also be seen as a scheduling problem - $Pm|online, known-OPT|C_{max}$ (see \citet{online_scheduling} for more information on online scheduling, and see \citet{OnlineSchedulingSurvey} for a recent survey on semi-online scheduling).

In online bin stretching, introduced by \citet{AZAR200117}, a set of items must be packed into $m$ bins. Items are revealed sequentially: an item must be packed before the next one is revealed. The bins do not have a capacity constraint, i.e. any item can always be packed in any bin regardless of the items already in the bin. The items sent must also satisfy the following constraint: they must fit into $m$ bins of unit size. The objective is to pack the items into the bins in order to minimize the load of the largest bin.

\citet{AZAR200117} proposed the following application of \textit{online bin stretching}: some files are sent one by one in some order to $m$ servers. The only information known is that the files were initially stored in $m$ other servers of unit capacity. How large should the new servers be in order to store all the files? How can we design an algorithm to place the files in the new servers?

Figure~\ref{Intro_example} gives a simple example with 2 bins ($m=2$). First, an item of size $1/2$ is given and placed in the first bin. Then, an item of size $1/2$ is given and placed in the second bin. But now, an item of size $1$ is given. The first bin needs to be \textit{stretched} to a capacity of $3/2$ in order to receive this new item - hence the name bin \textit{stretching}. This set of items is allowed since it could fit into 2 bins of size 1, if the two items of size $1/2$ are packed together.

\begin{figure}\unitlength = .65cm
     \centering
     \begin{subfigure}[b]{0.3\textwidth}

         \begin{picture}(2,6.5)
		\put(-.01,0){\line(0,1){5}}
		\put(3.01,0){\line(0,1){5}}
		\put(6.03,0){\line(0,1){5}}
		\put(-.01,0){\line(1,0){6.04}}
		\multiput(.05,3)(.2,0){30}{\line(1,0){.1}}
		\put(6.3,2.5){\makebox(1,1){$1$}}

		\put(-.08,0){\colorbox{yellow!20}{\framebox(3,1.5){$1/2$}}}
	    \end{picture}
         \caption{First item: $1/2$}
         \label{firstitem}
     \end{subfigure}
     \hfill
     \begin{subfigure}[b]{0.3\textwidth}
         \begin{picture}(2,6.5)
		\put(-.01,0){\line(0,1){5}}
		\put(3.01,0){\line(0,1){5}}
		\put(6.03,0){\line(0,1){5}}
		\put(-.01,0){\line(1,0){6.04}}
		\multiput(.05,3)(.2,0){30}{\line(1,0){.1}}
		\put(6.3,2.5){\makebox(1,1){$1$}}

		\put(-.08,0){\colorbox{yellow!20}{\framebox(3,1.5){$1/2$}}}
		\put(2.95,0){\colorbox{green!20}{\framebox(3,1.5){$1/2$}}}
	    \end{picture}
         \caption{Second item: $1/2$}
         \label{seconditem}
     \end{subfigure}
     \hfill
     \begin{subfigure}[b]{0.3\textwidth}
         \begin{picture}(2,6.5)
		\put(-.01,0){\line(0,1){5}}
		\put(3.01,0){\line(0,1){5}}
		\put(6.03,0){\line(0,1){5}}
		\put(-.01,0){\line(1,0){6.04}}
		\multiput(.05,3)(.2,0){30}{\line(1,0){.1}}
		\multiput(.05,4.5)(.2,0){30}{\line(1,0){.1}}
		\put(6.3,2.5){\makebox(1,1){$1$}}
		\put(6.2,4){\makebox(1,1){$3/2$}}

		\put(-.08,0){\colorbox{yellow!20}{\framebox(3,1.5){$1/2$}}}
		\put(2.95,0){\colorbox{green!20}{\framebox(3,1.5){$1/2$}}}
		\put(-.08,1.5){\colorbox{red!20}{\framebox(3,3){$1$}}}
	    \end{picture}
         \caption{Third item: $1$}
         \label{thirditem}
     \end{subfigure}
        \caption{Online bin stretching example}
        \label{Intro_example}
\end{figure}
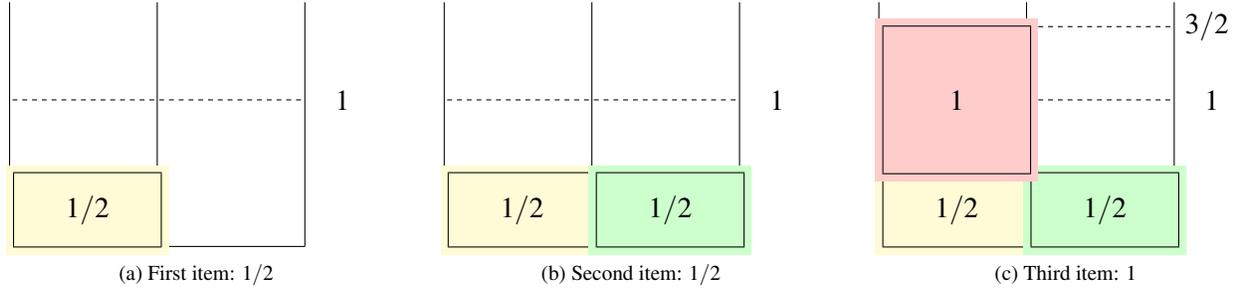

This example shows that it may be necessary to stretch the bins in order to fit all the items. So the question is: what is the smallest size the bins must be stretched to in order to pack any set of items? And how to pack the items into the bins?

\citet{Gabay2017} model this problem as a 2-player game. One player, the adversary, gives items which must be packed into bins by the other player, the algorithm. The algorithm aims to stretch the bins as little as possible, while the adversary tries to give items to force the algorithm to stretch the bins as much as possible. Then, a min-max algorithm can be used to construct a proof of a lower bound on the worst case performance of any online bin stretching algorithm. Using this method, several lower bounds have been proved. For instance, the lower bound 4/3 for any number of bins has been established in 1998 by \citet{AZAR200117} with a "by hand" case study proof; but until 2017, no other lower bounds have been found. \citet{Gabay2017} propose a computational method to find new proofs.  Using this new approach, better lower bounds have been proved (such as 19/14 for $3\leq m \leq 8$). The advantage of the computational approach is to enumerate a large number of different cases. Obtaining even better proofs requires exploring more cases, hence the efficiency of the method is critical in order to improve lower bounds even further.

We restate the idea from \citet{Gabay2017} in Section~\ref{section_LB}; then we propose new ideas to strongly improve the speed of computer searches for lower bounds: Section~\ref{section_propag_alg} presents an idea to propagate information on the game states in order to prune the search; and Section~\ref{next_item} proposes an improvement of the dynamic program of proposed in \citet{Bohm2017}. We also give counting arguments to have an accurate usage of memory during the search in Section~\ref{next_item} and Section~\ref{propag_example}.  With the new ideas proposed, three new lower bounds were found, and the search for lower bounds is faster. Numerical results are detailed in Section~\ref{section-results}.

\section{Online bin stretching}
In the remainder, the number of bins $m$ is fixed.
To give a formalization of the online bin stretching problem, let us first define possible sets of items that can be sent as \textit{instances}. Since the items that are sent can fit into $m$ bins of unit size, we give the following definitions:
\begin{definition}[Bin stretching instance]

A \textit{bin-stretching instance} is a finite sequence of strictly positive numbers $(x_1, \dots, x_k)$ such that there exists a partition $P_1, \dots, P_m$ of the integers $\{1, \dots, k\}$ verifying $$\forall i \in \{1, \dots, m\}\;\;\;\;\sum_{j \in P_i} x_j \leq 1$$
We write $\mathcal{I}_{BS}$ the set of bin-stretching instances, and the empty instance is noted $\emptyset$.
\end{definition}

\begin{definition}[Extension of bin-stretching instances]
A real positive number $y$ is called an \textit{extension} of a bin-stretching instance $I = (x_1, \dots, x_k)$ if the sequence $(x_1, \dots, x_k, y)$ is a bin-stretching instance. 
We note the instance $(x_1, \dots, x_k, y)$ as $I\oplus y$. 
Similarly, a finite sequence of real positive numbers $(y_1, \dots, y_j)$ is an extension of $I$ if the sequence $(x_1, \dots, x_k, y_1, \dots, y_j)$ is a bin-stretching instance. 
\end{definition}
After sending the items of an instance (which can be empty), only items that are extensions of the instance can be sent next. Here is a simple yet useful property on extensions:
\begin{property}\label{smaller_extensions}
Let $I\in\mathcal{I}_{BS}$ be an instance, and $y$ be an extension of $I$. Then any $y'<y$ is also an extension of $I$.
\end{property}
\begin{proof}
Since the items of the instance and the item of size $y$ can fit into $m$ bins of size 1, then the items of the instance and any item of size $y'<y$ can also fit into $m$ bins of size 1.
\end{proof}
An \textit{online bin stretching algorithm} decides in which bin to place a new item. More formally:
\begin{definition}[Online bin stretching algorithm]
An \textit{online bin stretching algorithm} associates an integer between $1$ and $m$ (included) to a triplet $(I, P, y)$, with $I = (x_1, \dots, x_k)$ being a bin-stretching instance, $P = (P_1, \dots, P_m)$ a partition of the integers $\{1,\dots,k\}$, and $y$ being an extension of $I$.
\end{definition}
We write $\mathcal{A}_{BS}$ the set of all bin-stretching algorithms.

\begin{definition}[Result of an algorithm on an instance]
Let $A$ be a bin-stretching algorithm and $I = (x_1, \dots, x_k)$ be a bin stretching instance. We iteratively construct the partitions $P^0, \dots, P^k$ as follows:
\begin{itemize}
    \item $\forall i \in \{1, \dots, m\}\;\;\;\; P_i^0 = \emptyset$
    \item $\forall i \in \{1, \dots, m\}, \forall j \in \{1, \dots, k\}\;\;\;\; P_i^j = \begin{cases}
     P_{i}^{j-1} \cup \{j\} & \text{if } A((x_1, \dots, x_{j-1}), P^{j-1}, x_j) = i\\
     P_{i}^{j-1}  & \text{otherwise}
		  \end{cases}$
\end{itemize}
\end{definition}
The partition $P^j$ corresponds to the arrangement of the first $j$ objects of the instance $I$ according to the algorithm $A$. We can then define the \textit{result} of the algorithm $A$ on the instance $I$, noted $R(I, A)$:
$$R(I,A) = \max_{i\in\{1, \dots, m\}}\sum_{j \in P^k_i}x_j$$
The result of an instance on an algorithm corresponds to the load of the fullest bin after the algorithm places all the objects of the instance. To analyze the performance of algorithms, we look at their worst-case:
\begin{definition}[Stretching-factor of a bin-stretching algorithm]
The \textit{stretching-factor} $s_{BS}(A)$ of a bin-stretching algorithm $A$ is defined by:
$$s_{BS}(A) = \max_{I\in \mathcal{I}_{BS}} R(I, A)$$
\end{definition}

We want to find the best stretching-factor that is achievable by an algorithm, defined as follows:

\begin{definition}[Optimal stretching-factor]
$$s^* = \min_{A\in\mathcal{A}_{BS}} s_{BS}(A) = \min_{A\in\mathcal{A}_{BS}}\;\; \max_{I\in\mathcal{I}_{BS}} R(I, A)$$ 
This lower bound tells us that any online bin stretching algorithm needs, in the worst case, to stretch the bins to a factor of at least $s^*$ to place all the items.
\end{definition}

\section{Finding lower bounds}\label{section_LB}
Computing the exact value of $s^*$ is a difficult task. Therefore, our objective is to find bounds on $s^*$. This paper focus on lower bounds. In order to find such lower bounds, we use similar methods as first introduced by \citet{Gabay2017} and improved by \citet{BOHM20221}. We will first restate the method and then propose improvements. The idea presented in those papers is to construct through computer searches a \textit{tree proof} of a lower bound. 

A \textit{tree proof} describes all possible algorithms on a few well chosen instances:
\begin{definition}[Tree proof]

\begin{enumerate}
    \item A tree $\mathcal{T}$ has a finite number of nodes.
    \item Each node $\mathcal{N} = (I, P, y)$ that is not a leaf corresponds to a bin-stretching instance $I$, a partition $P$ of the objects in the bins and an extension $y$ of $I$.
    \item The root is of the form $\mathcal{R} = (\emptyset, (\emptyset, \dots, \emptyset), y)$.
    \item Each leaf $\mathcal{L} = (I, P)$ corresponds to a bin-stretching instance $I$ and a packing $P$ of the objects in the bins.
    \item The children of a node $\mathcal N = (I, P, y)$, with $|I| = k$, are defined as all the possibilities to place the new item $y$ in a bin, so a node has at most $m$ children. In a more formal manner, for each bin $i$, there exists a child of $\mathcal T$ which is of the form $\mathcal T_i = (I\oplus y, P', y')$ or $\mathcal L_i = (I\oplus y, P')$, with $P_i' = P_i \cup \{k+1\}$ and for any $j\neq i$, $P_j' = P_j$.
    \item We define the \textit{value of a leaf} as the value of the fullest bin, and we define the \textit{value of the tree} $v(\mathcal{T})$ as the minimum over the values of the leaves.
\end{enumerate}
\end{definition}

Finding a tree of a certain value is a lower bound of $s^*$ (see proof below). Figure~\ref{tree_proof_4/3} is an example of such a tree, proving that for $m=2$ bins, $s^*\geq \frac{4}{3}$. For readability, we only write on each node the loads of the bins and the next item that is sent.
\begin{figure}[htbp]
    \centering
    \resizebox{\linewidth}{!}{
	\begin{tikzpicture}
	    \tikzset{every tree node/.style={align=center,anchor=north}}
	    \tikzset{level distance=45pt}
	    \Tree
	    [.{$(0,0)$\\{\color{red} Next: $1/3$}}
	    [.{$(1/3, 0)$\\{\color{red} Next: $1/3$}}
	    [.{$(1/3, 1/3)$\\{\color{red} Next: $1$}}
	    [.{$({\color{blue} 4/3}, 1/3)$} ]
	    [.{$(1/3, {\color{blue} 4/3})$} ]
	    ]
	    [.{$(2/3, 0)$\\{\color{red} Next: $2/3$}}
	    [.{$({\color{blue} 4/3}, 0)$} ]
	    [.{$(2/3, 2/3)$\\{\color{red} Next: $2/3$}}
	    [.{$({\color{blue} 4/3}, 2/3)$} ]
	    [.{$(2/3, {\color{blue} 4/3})$} ]
	    ]
	    ]
	    ]
	    [.{$(0, 1/3)$\\{\color{red} Next: $1/3$}}
	    [.{$(1/3, 1/3)$\\{\color{red} Next: $1$}}
	    [.{$({\color{blue} 4/3}, 1/3)$} ]
	    [.{$(1/3, {\color{blue} 4/3})$} ]
	    ]
	    [.{$(0, 2/3)$\\{\color{red} Next: $2/3$}}
	    [.{$(0, {\color{blue} 4/3})$} ]
	    [.{$(2/3, 2/3)$\\{\color{red} Next: $2/3$}}
	    [.{$({\color{blue} 4/3}, 2/3)$} ]
	    [.{$(2/3, {\color{blue} 4/3})$} ]
	    ]
	    ]
	    ]
	    ]
	\end{tikzpicture}
	}
    \caption{Tree proof of the lower bound 4/3 for 2 bins}
    \label{tree_proof_4/3}
\end{figure}
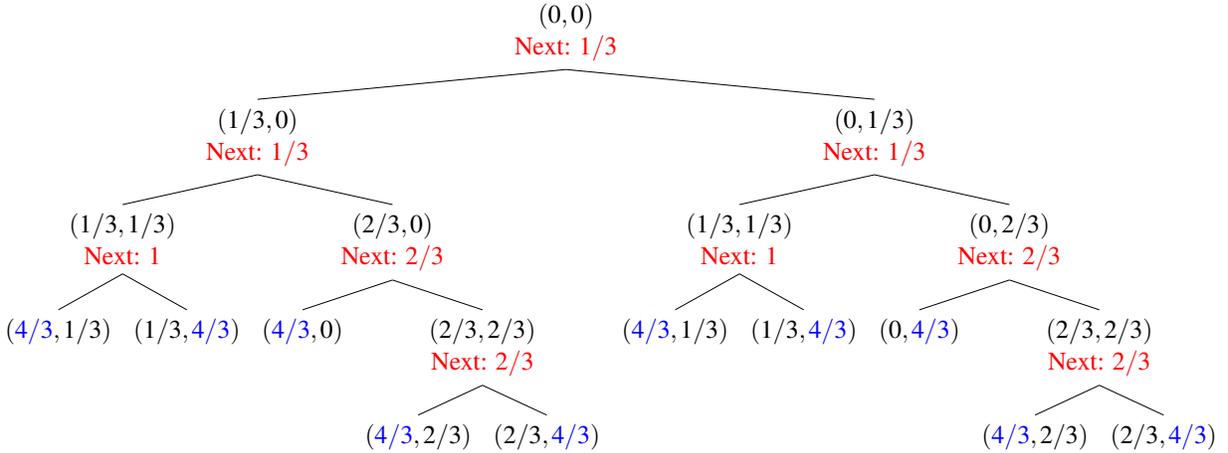

\begin{property}
The value of any tree proof is a lower bound of $s^*$.
\end{property}
\begin{proof}
Let $\mathcal{T}$ be a tree proof of value $v$ and of root $\mathcal{R}$. We will show that $s^*\geq v$.\\
Let $A\in\mathcal{A}_{BS}$ be an algorithm. We construct an instance $I$ such that the result of the algorithm on the instance $R(I, A)$ is at least $v$. To do so, we follow nodes $\mathcal{N}_i = (I_i, P_i, y_i)$ in the tree as follows:
\begin{itemize}
    \item $\mathcal{N}_0 = \mathcal{R}$
    \item If $\mathcal{T}_i = (I_i, P_i, y_i)$ is not a leaf, the algorithm decides to place the item $y_i$ in the bin $A(I_i, y, P_i)$. We then define $\mathcal{T}_{i+1}$ to be the child node of $\mathcal{T}_i$ corresponding to placing the item $y_i$ in the bin $A(I_i, y_i, P_i)$, which always exists by definition of a tree proof.
    \item If $\mathcal{T}_i$ is a leaf, then by construction of the tree, the value of the leaf corresponds exactly to the result of the instance $I_i$ on the algorithm $A$.
\end{itemize}
Since the tree is finite, a leaf $\mathcal{L} = (I, P)$ is ultimately reached, and its value corresponds exactly to the result of the instance $I$ on the algorithm $A$. Since the value of the tree $v(\mathcal{T})$ is defined as the smallest value of a leaf, we have:
$$v(\mathcal{T}) \leq R(I, A) \leq s_{BS}(A)$$
So,
$$\forall A\in\mathcal{A}_{BS}, v(\mathcal{T}) \leq s_{BS}(A)$$
Hence $v(\mathcal{T}) \leq s^*$.
\end{proof}

The stretching factor of any algorithm is an upper-bound of $s^*$. To obtain a lower bound, we can model the problem as a 2-player game, one being the algorithm trying to fit items in the bins, and one being the adversary proposing items. Then we can explore computationally a game tree corresponding to this problem.

In order to deal with integer values and to have a finite game, we scale all the previous definitions by a certain factor $g$, such that all the bins have capacity $g\in \mathbb{N}^*$, and we restrict the item weights to $\{1, \dots, g\}$. Eventually, we increase the value of $g$ so that more items can be proposed, and better bounds can be found.

The goal of the player "adversary" is to propose items such that no matter what the algorithm does, one bin will reach at least a certain value $t$, with $t>g$. If we succeed in finding such a strategy for the player "adversary", then a tree proof of value $t/g$ has been constructed, and we have obtained a lower bound of $s^*$: 
$$s^*\geq\frac{t}{g}$$
This means that any online bin-stretching algorithm has a stretching-factor of at least $\frac{t}{g}$ (as in Figure~\ref{tree_proof_4/3}).

The adversary can only send items such that the items fit into $m$ bins of size $g$. If the adversary cannot send any more items, and no bin is filled with items of total size greater or equal to $t$, then the algorithm wins. If a bin is filled with items of total size greater or equal to $t$ then the adversary wins.

We can then prove a lower bound of $t/g$ on $s^*$ by computing a tree proof of value $t/g$.

So far, the following lower bounds have been found:
\begin{itemize}
    \item \citet{AZAR200117} have established the general lower bound of 4/3 for any number of bins.
    \item \citet{Gabay2017} have found the lower bounds 19/14 for 3 and 4 bins through computer searches of tree proofs.
    \item \citet{BOHM20221} proved the lower bound of 19/14 for 5, 6, 7 and 8 bins and improved the lower bound of 19/14 for 3 bins to 112/82 with an improvement of the previous method.
\end{itemize}

Upper bounds seem more tricky to find, as one needs to construct an algorithm and then analyze the algorithm worst case. So far, the following upper bounds have been found:
\begin{itemize}
    \item \citet{Kellerer1997} found an upper bound of 4/3 for 2 bins - hence the algorithm is optimal.
    \item \citet{AZAR200117} gave a general upper bound of 13/8 = 1.625 for any number of bins.
    \item \citet{AZAR200117} showed that, if the number of bins is between $3$ and $21$, then there exists an algorithm with stretching factor $\frac{5m-1}{3m+1}$.
    \item \citet{Bohm2017} improved the upper bound for 3 bins to 11/8 = 1.375, and improved the upper bound for any number of bins to $3/2$.
    \item \citet{Liesk} improved several upper bounds through computer searches: $31/22$ for 4 bins, $23/16$ for 5 bins, $19/13$ for 6 bins.
\end{itemize}

Table~\ref{bounds_table} and Figure~\ref{graph_bounds_1} give a summary of the known lower bounds and upper bounds on $s^*$. 
\\ \\
\begin{table}
\begin{center}
\begin{tabular}{ |c|c|c|c|c|c| } 
 \hline
 Bins & 2 & 3 & 4 & 5\\ 
 \hline \hline
 Lower bound & $4/3\textbf{ \cite{AZAR200117}}$ & $112/82\textbf{ \cite{BOHM20221}}$ & $19/14\textbf{  \cite{Gabay2017}}$ & $19/14\textbf{  \cite{BOHM20221}}$ \\ 
 \hline
  Decimal & 1.3334 & 1.3659 & 1.3571 & 1.3571 \\
   \hline \hline
   Upper bound & $4/3\textbf{  \cite{Kellerer1997}}$ & $11/8\textbf{  \cite{Bohm2017}}$ & $31/22\textbf{  \cite{Liesk}}$ & $23/16\textbf{  \cite{Liesk}}$ \\ 
 \hline
  Decimal & 1.3334 & 1.375 & 1.409 & 1.438 \\
   \hline
\end{tabular}
\end{center}
\begin{center}
\begin{tabular}{ |c|c|c|c|c| } 
 \hline
 Bins & 6 & 7 & 8 & 9 $\dots \infty$ \\ 
 \hline \hline
 Lower bound & $19/14\textbf{ \cite{BOHM20221}}$ & $19/14\textbf{ \cite{BOHM20221}}$ & $19/14\textbf{ \cite{BOHM20221}}$ & $4/3\textbf{ \cite{AZAR200117}}$ \\ 
 \hline
 Decimal & 1.3571 & 1.3571 & 1.3571 & 1.3334 \\
   \hline \hline
    Upper bound & $19/13\textbf{  \cite{Liesk}}$ & $3/2\textbf{ \cite{Bohm2017}}$ & $3/2\textbf{ \cite{Bohm2017}}$ & $3/2\textbf{ \cite{Bohm2017}}$ \\ 
 \hline
  Decimal & 1.462 & 1.5 & 1.5 & 1.5 \\
   \hline
\end{tabular}
\end{center}
    \caption{Known lower and upper bounds on the bin stretching factor}
    \label{bounds_table}
\end{table}

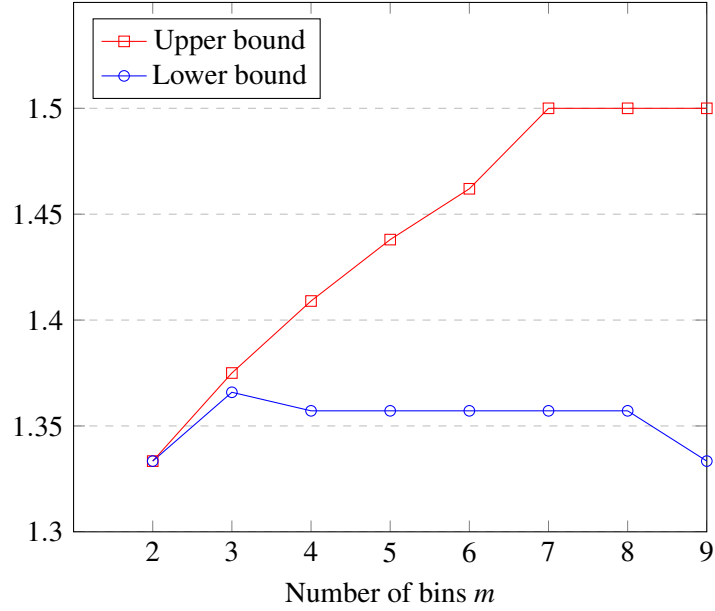
\begin{figure}[htbp]
\centering
\begin{tikzpicture}
\begin{axis}[
    xlabel={Number of bins $m$},
    xmin=1, xmax=9,
    ymin=1.3, ymax=1.55,
    xtick={2, 3, 4, 5, 6, 7, 8, 9},
    ytick={1.3, 1.35, 1.4, 1.45, 1.5},
    legend pos=north west,
    ymajorgrids=true,
    grid style=dashed,
]

\addplot[
    color=red,
    mark=square,
    ]
    coordinates {
    (2,1.3334)(3,1.375)(4,1.409)(5,1.438)(6,1.462)(7,1.5)(8,1.5)(9,1.5)
    };
    \addlegendentry{Upper bound}
\addplot[
    color=blue,
    mark=o,
    ]
    coordinates {
    (2,1.3334)(3,1.3659)(4,1.35714)(5,1.35714)(6,1.35714)(7,1.35714)(8,1.35714)(9,1.3334)
    };
    \addlegendentry{Lower bound}
\end{axis}

\end{tikzpicture}
\caption{Known lower and upper bounds on the bin stretching factor}
\label{graph_bounds_1}
\end{figure}

As one can see in Figure~\ref{graph_bounds_1}, the gap between upper and lower bound for $m=3$ bins is pretty tight, but there is still much uncertainty about where $s^*$ really stands for $m\geq4$.
\\ \\
Let us now construct a rudimentary algorithm to try to find lower bounds as in \citet{Gabay2017}. To do so, we still need the following definitions:\\

\begin{definition}[Packing]
A \textit{packing} $b$ is defined by by a vector $b=(b_1, \dots, b_m)^T \in \mathbb{N}^m$ such that:
$$b_1\geq b_2 \geq \dots \geq b_m\geq 0$$
We also write: $$||b||_{1} = \sum_{i=1}^{m} b_i$$ 
\end{definition}
A packing corresponds to a filling of the bins. Since all bins are equivalent, we avoid symmetries by having the bins in a packing given by non increasing order.

Let us note $e_i$ the vector with a 1 on the $i$-th coordinate and 0 elsewhere.
If the bins have loads corresponding to a packing $b$, and an item $y$ is then placed into some bin $i$, we write the resulting packing $b' = b + y\cdot e_i$. This is an abuse of notation, as a packing is defined as having all coordinates non increasing. Doing this operation when considering packings as vectors does not necessarily preserve this property. However, we consider that this operation also rearranges the coordinates so that the result is still a packing.

A packing does not correspond to a unique state in the 2 player game; all the items that were previously given must also be considered. For example, with two bins of capacity 3, the packing $(2, 2)$ could be obtained after two objects of size 2, or after two objects of size 1 and one object of size 2. In the first case, the adversary can only send an item of size 1, while in the second, the adversary can send an item of size 2. So considering only packings isn't sufficient: each possible game state corresponds to both an instance (items that were sent) and a corresponding packing (loads of the bins). We call a configuration such a game state:
\begin{definition}[Configuration]
A \textit{configuration} is a pair $(I, b)$, with $I = (x_1, \dots, x_k)$ corresponding to the set of items that were given and $b$ corresponding to a packing of the bins. A configuration $(I, b)$ must verify:\begin{itemize}
    \item $(\frac{x_1}{g}, \dots, \frac{x_k}{g}) \in \mathcal{I}_{BS}$
    \item $b$ is a packing
    \item $b$ should be "reachable" with the set of items $I$, that is to say:
$$\exists \;(P_1, \dots, P_m) \text{ partition of }\{1, \dots, k\} \;|\; \forall i\in \{1, \dots, m\} \; \sum_{j\in P_i}x_j = b_i$$
\end{itemize}
\end{definition}
It can be remarked that this 2-player game has only two outcomes: either the algorithm or the adversary wins. This means that for any game state (i.e. for any configuration) there exists a player with a winning strategy. 
\\

We now have all the tools needed to construct a rudimentary foundation of an algorithm to obtain lower bounds. Algorithm~\ref{alg1} and Algorithm~\ref{alg2} form an example of what an implementation could look like. However, without adding efficient pruning, this algorithm will probably not be useful to obtain any new results. The two functions in the pseudo-code below simulate the players "adversary" (which sends items) and "algorithm" (which places items in bins).
\\
\begin{algorithm}[H]
	\caption{Function \textit{adversary\_to\_play}} 
	\label{alg1}
	\begin{algorithmic}[1]
	    \State \textbf{Input}: Configuration $C=(I, b)$
	    \State \textbf{Output}: Boolean value, True if the bound $t/g$ was proved, False otherwise
		\For {$y$ extension of $I$}
			\State $\nu \leftarrow$ \textit{algorithm\_to\_play}$(C, y)$
			\If{$\nu$ is True}
			    \State \textbf{Return} True
			\EndIf
		\EndFor
	    \State \textbf{Return} False 
	\end{algorithmic} 
\end{algorithm}

\begin{algorithm}[H]
	\caption{Function \textit{algorithm\_to\_play}}
	\label{alg2}
	\begin{algorithmic}[1]
	    \State \textbf{Input}: Configuration $C=(I, b)$ and item $y$
	    \State \textbf{Output}: Boolean value, True if the bound $t/g$ was proved when adding the item $y$ to the configuration $C$, False otherwise
		\For {bin $i = 1, 2, \ldots, m$}
		    \State Construct configuration $C' = (I\oplus y, b + e_i\cdot y)$
		    \If{$b_i + y < t$}
			    \State $\nu \leftarrow$ \textit{adversary\_to\_play}($C'$)
			    \If{$\nu = False$}
			        \State \textbf{Return} False 
			    \EndIf
			\EndIf
		\EndFor
		\State \textbf{Return} True
	\end{algorithmic} 
\end{algorithm}

In the previous algorithms, for simplicity, the method to find all possible extensions of an instance (line 3 of the first algorithm) isn't explained. This will however be detailed in Section~\ref{next_item}.

A simple improvement of the previous algorithms is to use a hash table to store all the configurations that were already explored. \citet{BOHM20221} proposed Zobrist hashing. On one hand, using hash tables yields a strong increase in search speed but on the other hand, this optimization may also require a large amount of memory, and especially so when the number of bins $m$ and the capacity $g$ increase.

The game trees on which such lower bounds are found have humongous size; lower bounds may also take long computation time to be found. We will propose in the next sections new methods to speed up the searches, in order to find better lower bounds. 

\section{Propagating algorithm winning packings}\label{section_propag_alg}
As explained previously, a game state (or configuration) is represented by both a packing (the loads of the bins) and an instance (the set of items that were given). It can be observed that if the player "algorithm" manages to reach certain packings, the adversary will never be able to win -  and hence we will not be able to prove the lower bound $\frac{t}{g}$. For example, with two bins ($m=2$) each of capacity $g = 6$ while trying to prove the bound $\frac{8}{6}$, if we reach the packing $b=(6, 5)^T$ then the adversary can only give an item of size $1$ which the algorithm can place in any of the bins without reaching $t=8$; the packing $(5, 5)^T$ is also always winning for the algorithm, since only one item of size $2$ or two items of size $1$ can be given, reaching the states $(7, 5)$ or $(6, 6)$.

Knowing if a packing is always winning for the algorithm allows us to stop exploring right away upon reaching this packing.

A packing $b$ is said to be \textit{algorithm-winning} if the algorithm always wins upon reaching this packing, regardless of what items were given beforehand by the adversary. We note $P_{alg}$ the set of algorithm-winning packings. More formally:
\begin{definition}[Algorithm-winning packing]

Let $b$ be a packing. If $\forall I \in \mathcal{I}_{BS}$ such that $(I, b)$ is a configuration, and that this configuration $(I, b)$ is winning for the algorithm, then $b\in P_{alg}$.
\end{definition}

We propose here a simple criterion to decide quickly for some packings if they are always winning for the algorithm. Stronger versions of this criterion were given by \citet{BOHM20221}, labelled as "good situations". However, we use this criterion for its simplicity and we will build on the idea to construct another property which is even stronger than what was proposed in \cite{BOHM20221}.
\begin{property}\label{init_palg}
Let $b$ be a packing. We remind that the coordinates of $b$ are sorted in non increasing order, so $b_1 \geq b_2 \geq \dots \geq b_{m-1} \geq b_m$.\\
$$\left(b_1<t\text{ and } b_{m-1} >  \frac{mg-t}{m-1}\right) \implies b\in P_{alg}$$
\end{property}

\begin{proof}
Let $b = (b_1, \dots, b_m)$ be a packing, with $b_1 < t$. Let us remark that if placing all the remaining items in $b_m$ does not reach at least $t$, then the bound cannot be proven; i.e. $b_m + mg-||b||_1 < t \implies b\in P_{alg}$. The lefthand side of the implication is equivalent to $mg-t < ||b||_1 - b_m $.\\Yet $||b||_1 - b_m = \sum_{i=1}^{m-1}b_i \geq (m-1) b_{m-1}$. So if $(m-1) b_{m-1} > mg-t$, then $b_m + mg-||b||_1 < t$; hence the bound cannot be proven, and $b\in P_{alg}$.
\end{proof}

We will provide an intuition of why this property is true through an example. Let us consider 2 bins of capacity 6 and let's try to prove the bound 8/6. It is possible to give a graphical representation of packings as in Figure~\ref{greensreds}.

\begin{figure}[htb]
    \centering
    \includegraphics[scale = 0.4]{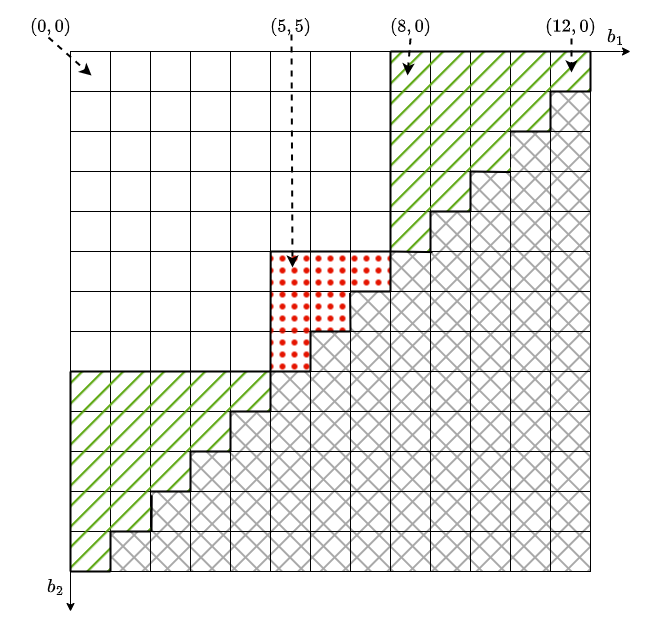}
    \caption{Geometric representation of packings, for $m=2$, $t=8$ and $g=6$ (red dotted area: algorithm winning packings, green hatched area: adversary winning packings, double hatched grey area: unreachable packings)}
    \label{greensreds}
\end{figure}

Each packing corresponds to a cell in the grid. The double hatched grey area corresponds to packings where the total load is strictly bigger than 12, which is impossible since all the items given must fit into $2$ bins of size 6. The green hatched areas correspond to packings where at least one bin has load over 8 - hence the adversary always wins when reaching a green packing. 
\\
From a certain packing, adding an item to any bin can only increases one coordinate of the packing. It can be seen in the figure that for any packing in the red dotted area, it will never be possible to reach a packing in the green hatched area. Hence the adversary is never able to prove the bound when a packing from this red dotted area is reached; so this red dotted area corresponds to algorithm-winning packings. 

\begin{figure}
    \centering
    \includegraphics[scale = 0.4]{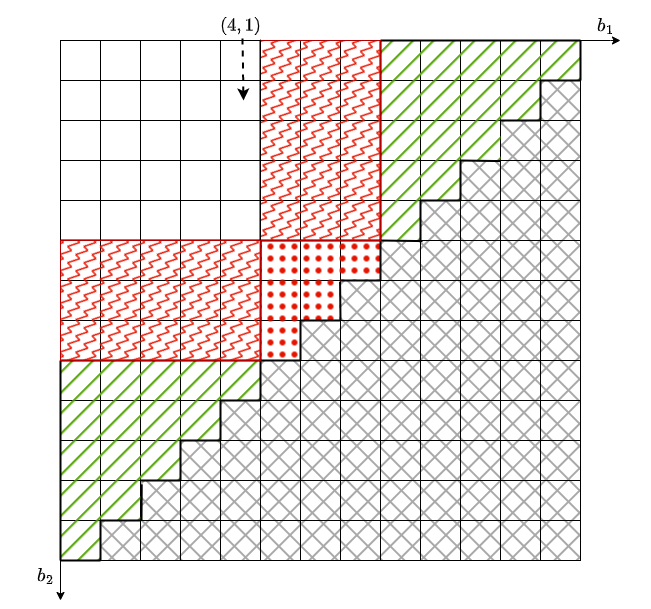}
    \caption{Extending the algorithm-winning packings to the area with red zig-zag lines}
    \label{greensreds2}
\end{figure}

Let us now take a look at the area with red zig-zag lines in Figure~\ref{greensreds2}. If a packing from this area is reached, then the algorithm can always decide for the next items to go in the direction of the red dotted area. Eventually, the red dotted area will be reached, and the algorithm will win. So this area with red zig-zag lines also corresponds to algorithm-winning packings.

This criterion isn't that strong, in the sense that it does not cover enough packings to significantly speed up the search for lower bounds, and especially so when the number of bins increases. We propose the following idea to increase the number of packings that are always winning for the algorithm: it is possible to propagate the algorithm-winning packings.
\\
\\
For a packing, if for all possible item sizes $s$ that can be given, it is always possible to put the item in a bin $i$ to reach an algorithm-winning packing, then the packing is also algorithm-winning. The biggest item size that could be given is smaller $\min\{g;\; mg - ||b||_1\}$, because we know that the sum of item sizes is smaller than $mg$ and that each item must individually be smaller than $g$.\\

Let us look at an example with $2$ bins of capacity $6$ while trying to prove the bound $8/6$. In Figure~\ref{greensreds2}, the packing (4, 1) can't be decided to be algorithm-winning with the previous criteria. However, if an item of size 1, 2 or 3 is given, this item could be placed in the first bin and an algorithm-winning packing, (5, 1), (6, 1) or (7, 1) would be reached. If an item of size 4, 5 or 6 is given, it could be placed in the second bin and once again, an algorithm-winning packing, (5, 4), (5, 5) or (6, 5) would be reached. Hence, no matter what the given item is, it is possible to place it in a certain bin so that an algorithm-winning packing is reached. This means that the packing (4, 1) is also algorithm-winning. Then, we can also use the new information that this packing (4, 1) is algorithm-winning to propagate even more.

Let us now properly state this idea through the following lemma:

\begin{property}\label{p_lemma}
Let $b$ be a packing, and $s_{max} = \min\{g;\; mg - ||b||_1\}$ be the maximum size of an item that could be given from this packing. Then:\\
$$( \,\forall s \in \{1, \dots, s_{max}\}, \; \exists\; 1\leq i\leq m \;\text{such that}\; b + s.e_i \in P_{alg}) \, \implies (b\in P_{alg})$$
with $e_i$ being the vector with a 1 on the $i$-th coordinate and 0 elsewhere.
\end{property}

\begin{proof}
Upon reaching a packing $b$ satisfying this condition, no matter what item is sent next, it is possible to place it in a bin to reach a new configuration where the algorithm always wins. Hence the algorithm also always wins from the packing $b$.
\end{proof}

This lemma allows to propagate algorithm-winning packings. We can first identify some algorithm-winning packings with the use of the Property~\ref{init_palg}, then apply Property~\ref{p_lemma} to propagate algorithm-winning packings.

\begin{figure}[htpb]
    \centering
    \includegraphics[scale=0.4]{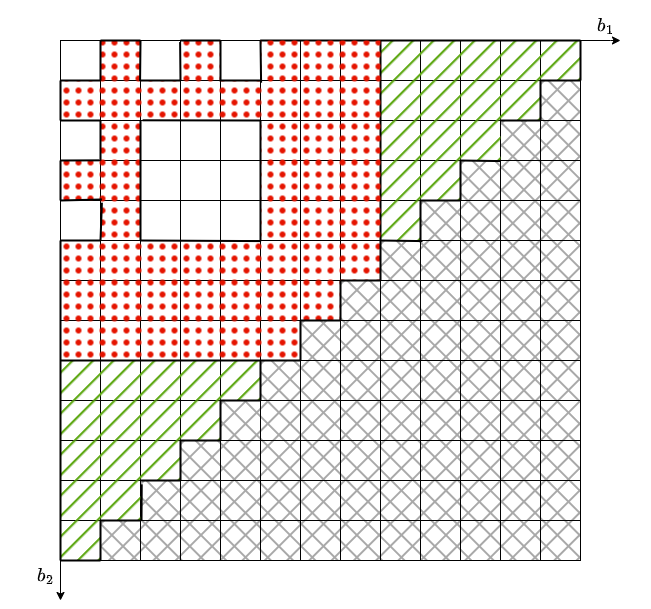}
    \caption{Propagation of algorithm-winning packings using Property~\ref{p_lemma}}
    \label{propag_example}
\end{figure}

Figure~\ref{propag_example} is an example of what the propagation looks like for $m=2$, $g=6$, $t=8$.

It is actually possible to improve this propagation idea: in some packings, the adversary may need to send a big object to "jump" over algorithm-winning packings; as such, during the tree-search, if this big object can't be sent by the adversary, then we already know that the algorithm wins. Moreover, it can be observed that, the further the game goes, the more the player adversary is restricted in the items he can give. Hence, if at some point the adversary can't send a certain size of item, then he will never be able to send this size for the rest of the game. \\
Using this idea, it is possible to propagate as well a required item size for each packing: if the player adversary, upon reaching a packing, can't send at least the required size of this packing, then the bound can't be proven - in other words, the algorithm always wins.

Here is an illustration of this idea. Let us look at Figure~\ref{propag_example}, and try to prove the bound $8/6$ with $2$ bins. If we consider the packing (4, 4), we need to send an item of size at least $4$ to not give the possibility to the algorithm to go into an algorithm-winning packing.\\
In the packing (4, 3) we need to send an item of size at least $5$.\\
In the packing (4, 2), at least $6$.\\
In the packing (3, 2), if we send an item of size between 2 and 5, the algorithm can decide to go to an algorithm-winning packing. So an item of size either 1 or 6 must be sent. However, if the adversary cannot send an item of size 6 and sends an item of size 1, the algorithm could choose to place the item in the first bin, thus reaching the packing $(4, 2)$, which requires an item of size 6. The adversary can't send an item of size 6 since he couldn't do that previously - so the algorithm wins.\\

We now properly present this idea, and also merge it with the previous propagation idea. We define the following function $v$ on packings such that if the adversary can't send at least $v(b)$ on the packing $b$, then the algorithm is winning. On packings which are known to be algorithm-winning packings (using the algorithm-winning propagation idea), we fix $v(b) = g+1$ to be consistent with the previous definition of $v$.
\begin{definition}[Minimum biggest extension]
For any packing $b$, we define $v(b)$ as:
\begin{itemize}
    \item If $b_1 \geq t$, $v(b) = 0$ \textit{(the bound $\frac{t}{g}$ is proved)}
    \item Else, if $||b||_1 = mg$ then $v(b) = g+1$ \\\textit{The adversary can't send any additional items, and the bound has not been proved, so the algorithm wins}
    \item Else, if $b_{m-1} \geq \lceil \frac{mg-t}{m-1} \rceil + 1$, $v(b) = g+1$\\\textit{(according to Property~\ref{init_palg}, this packing is algorithm-winning)}
    \item Otherwise, let $s_{max} = \min\{g;\; mg - ||b||_1\}$ be the maximum item size that could be given from the packing $b$, and:
    $$v(b) = \min_{s\leq s_{max}}\; \max_{i \leq m}\;\; \max \{s; v(b + s\cdot e_i)\}$$
\end{itemize}
\end{definition}

\begin{theorem}
If from the configuration $(I, b)$, the adversary is not able to send an item of size at least $v(b)$ then the algorithm wins.

In other words, if $(I, b)$ is a configuration and $v(b)\neq 0$, then:
$$v(b) \text{ is not an extension of I } \implies (I, b) \text{ is a winning configuration for the algorithm}$$
\end{theorem}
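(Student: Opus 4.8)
The plan is to prove the statement by strong induction on the remaining capacity $mg - ||b||_1$, exploiting the recursive structure of $v$: the recursion only ever evaluates $v$ at packings $b + s\cdot e_i$ with $s \geq 1$, hence with strictly larger $\ell_1$-norm, so the induction is well-founded. Before starting, I would record two preliminary facts. First, since $v(b)\neq 0$ we are in one of the last three branches of the definition, so in particular $b_1 < t$. Second, the hypothesis that $v(b)$ is not an extension of $I$ together with Property~\ref{smaller_extensions} (downward-closedness of extensions) forces \emph{every} extension $y$ of $I$ to satisfy $y < v(b)$: if some extension were $\geq v(b)$, then $v(b)$ itself would be an extension, a contradiction. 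The crucial monotonicity observation I would isolate next is that the set of extensions can only shrink as items are appended: if $z$ is an extension of $I\oplus y$ then $I\oplus y\oplus z\in\mathcal{I}_{BS}$, and deleting $y$ from a valid instance leaves a valid instance, so $z$ is also an extension of $I$. Thus every extension of $I\oplus y$ is in particular an extension of $I$, hence $< v(b)$.

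The base (immediate) cases correspond to the non-recursive branches. If $b_{m-1} \geq \lceil\frac{mg-t}{m-1}\rceil + 1$, then $b_{m-1} > \frac{mg-t}{m-1}$ and, with $b_1<t$, Property~\ref{init_palg} gives $b\in P_{alg}$, so the algorithm wins from $(I,b)$ for \emph{any} preceding instance. If instead $||b||_1 = mg$, then $s_{max}=\min\{g,\,mg-||b||_1\}=0$, so the adversary can send no further item; since $b_1 < t$ no bin has reached $t$, and the game ends as an algorithm win. These cases require no induction hypothesis.

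In the remaining (recursive) case I would let the adversary play an arbitrary extension $y$ of $I$ (if none exists, the game already ends in an algorithm win since $b_1<t$) and exhibit a winning reply. By the preliminary fact $y < v(b)$, and as in the derivation of $s_{max}$ one checks $y \leq s_{max}$ (appending $y$ keeps the total load $\leq mg$ and $y\leq g$). Rewriting the inner expression as $\max_i\max\{y,\,v(b+y\cdot e_i)\} = \max\{y,\ \max_i v(b+y\cdot e_i)\}$ and using that this quantity is $\geq v(b) > y$ by definition of $v(b)$ as the minimum over $s$, I conclude $\max_i v(b+y\cdot e_i)\geq v(b)$, so there is a bin $i_0$ with $v(b+y\cdot e_{i_0})\geq v(b)$. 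The algorithm places $y$ in bin $i_0$, reaching the configuration $(I\oplus y,\,b')$ with $b' = b + y\cdot e_{i_0}$. Then $v(b')\geq v(b)\geq 1$, so $v(b')\neq 0$; and since every extension of $I\oplus y$ is $< v(b)\leq v(b')$, the number $v(b')$ is not an extension of $I\oplus y$. Because $||b'||_1 = ||b||_1 + y > ||b||_1$, the induction hypothesis applies to $(I\oplus y,b')$ and yields that the algorithm wins from there. As $y$ was an arbitrary adversary move, the algorithm wins from $(I,b)$, completing the induction.

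The main obstacle is the inductive step, and within it the two linked subtleties: extracting the correct bin $i_0$ from the nested $\min$--$\max$ definition of $v$, and transferring the hypothesis ``$v(b)$ is not an extension'' from $I$ to the successor instance $I\oplus y$. The latter is what makes the monotonicity lemma on extension sets indispensable — it is precisely the formalisation of the paper's remark that the adversary becomes more constrained as the game proceeds — and combined with the inequality $v(b')\geq v(b)$ it closes the induction.
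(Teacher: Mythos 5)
Your proof is correct and follows essentially the same route as the paper's: induction on $\lVert b\rVert_1$, bounding every extension of $I$ strictly below $v(b)$, extracting a bin $i_0$ with $v(b+y\cdot e_{i_0})\geq v(b)$ from the min--max recursion, and transferring ``not an extension'' to $I\oplus y$ via the shrinking of the extension set. If anything, your write-up is slightly more careful than the paper's, since you explicitly treat the non-recursive branch $b_{m-1}\geq\lceil\frac{mg-t}{m-1}\rceil+1$ (where the recursive formula for $v$ does not apply) as a separate base case handled by Property~\ref{init_palg}, and you isolate the monotonicity of extension sets as a lemma rather than leaving it implicit.
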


\begin{proof}
Let us prove this theorem by induction on $||b||_1$ non increasing.\\

If $||b||_1 = mg$, then the adversary cannot send any additional items so the 2-player game is over. Either $b_1 \geq t$, then $v(b) = 0$; or the largest bin has load strictly below $t$, so $v(b) = g+1$. In both cases, the theorem is true. 

Now, let $b$ be a packing with $v(b)\neq 0$ and let us suppose that this theorem was proved for all $b'$ such that $||b'||_1>||b||_1$. We then note $s_{max} = \min\{g;\; mg - ||b||_1\}$ to be the maximum item size that could be given from the packing $b$.

Let $I\in \mathcal{I}_{BS}$ be an instance such that $v(b)$ is not an extension of $I$. Let us prove that no matter what item is sent from the configuration $(I, b)$, there exists a bin such that placing the item in this bin results in a win for the player algorithm.

We define $y_{max} = \max\{y \text{ extension of } I\}$, the biggest item that can be sent from the configuration $C = (I, b)$. Since $v(b)$ is not an extension of $I$, we have $y_{max} < v(b)$.\\
Let us show that $y_{max} \leq \min\{g;\; mg - ||b||_1\} = s_{max}$.  Since any extension $y$ of $I$ must satisfy $I\oplus y \in \mathcal{I}_{BS}$, then: $\begin{cases}
||b||_1 + y \leq mg \Leftrightarrow y \leq mg - ||b||_1\\
y\leq g
\end{cases}$

So $y_{max} \leq \min\{g;\; mg - ||b||_1\} = s_{max}$.

Let $y\in \{1, \dots, y_{max}\}$. By definition, $v(b) = \min_{s\leq s_{max}}\; \max_{i \leq m}\;\; \max \{s; v(b + s\cdot e_i)\}$. Let us write $G_y(i) = \max \{y; v(b + y\cdot e_i)\}$, and $i_{max} = \argmax G_y$.\\
We then have: $\forall y\leq y_{max},\;\;\;v(b) \leq G_y(i_{max})$.\\

Then, we show that placing the item $y$ in the bin $i_{max}$ wins for the algorithm:\\
If $G_y(i_{max}) = y$, then $v(b) \leq G_y(i_{max}) = y \leq y_{max} < v(b)$, which is a contradiction. 
\\
So we have $G_y(i_{max}) = v(b + y\cdot e_i)$. Let us note the configuration $C' = (I\oplus y, b+y\cdot e_{i_{max}})$. Then: 

$v(b) \leq G_y(i_{max}) = v(b+y\cdot e_{i_{max}})$.

Since $v(b)$ is not an extension of $I$ and $v(b+y\cdot e_{i_{max}})\geq v(b)$ then $v(b+y\cdot e_{i_{max}})$ is not an extension of $I$ hence not an extension of $I\oplus y$.

By induction, as $v(b+y\cdot e_{i_{max}})$ is not an extension of $I\oplus y$, the algorithm always wins from the configuration $C'$.

So for any item $y$ that can be sent, placing the item in the bin $i_{max} =  \argmax G_y$ results in a configuration that is winning for the algorithm, which concludes the proof.
\end{proof}

To compute $v(b)$, we proceed by non increasing values of $||b||_{1}$, starting with $||b||_1 = mg$. 

To fully use the information in our tree search, we need to store for each possible packing $b$ the associated value $v(b)$. Hence we need to know how much memory to allocate in order to store all of the values $v(b)$, and thus we would like to know how many packings there are. If there wasn't the constraint forcing the coordinates to be non increasing, a possible upper bound of this number could be $t^m$ possible packings since each bin could have $t$ different possible values (from 0 to $t-1$). However, we can be much more efficient than this by using symmetries as follows.
\begin{notation}\
Let $k\in \mathbb{N}^*$ and $n\in \mathbb{N}^*$. We write $P_{k, n}$ the set of packings over $n$ bins such that each coordinate is strictly less than $k$ :
$$P_{k, n} = \{ (b_1, \dots, b_n)\in \mathbb{N}^n\;|\; k > b_1\geq b_2 \geq \dots \geq b_n \geq 0\}$$ We moreover define $N_{k, n}$ to be the cardinality of the previous set:
$$N_{k, n} = |P_{k, n}|$$

\end{notation}
We aim to compute $N_{t, m}$ to know how much memory is needed to store each value $v(b)$.
\begin{property}\label{Nkm}
Let $k\in \mathbb{N}^*$ and $n\in \mathbb{N}^*$. Then the number of packings over $n$ bins such that each coordinate is strictly less than $k$ is exactly:
$$N_{k, n} = \binom{k+n-1}{k-1} = \frac{k(k+1)(\dots)(k+n-1)}{n!} $$
\end{property}
The proof is given in Appendix~\ref{Proof_Nkm}.

With the previous result we can compute the number of packings $N_{t, m}$, and we know how much memory is needed to store all the values of $v(b)$. We now only need a bijection, which we will call $ind_m$, from $P_{t, m}$ to the integers between $0$ and $\binom{t+m-1}{t-1}-1$. With such a bijection, we can store all the values $v(b)$ in a table, such that the value $v(b)$ in the table is at the index $ind_m(b)$.
\begin{definition}[Index of a packing]
The index $ind_m(b)$ of a packing $b = (b_1, \dots, b_m)$ is defined as follows:
$$ind_m(b) = \sum_{i=1} ^ {m} N_{b_i, m-i+1}$$
\end{definition}
\begin{property}\label{bijection_prop}
For any $t$, the function $ind_m$ is a bijection between $P_{t, m}$ and $\left\llbracket 0, \binom{t+m-1}{t-1}-1\right\rrbracket$
\end{property}
The proof is given in Appendix~\ref{bijection_prop_proof}

This bijection allows an efficient memory usage to store the value $v(b)$ for each $b$. One can simply compute and store in a table all the coefficients $N_{t', m'}$ for $t'=1, \ldots, t$ and $m' = 1, \ldots, m$. Then, a table of size $N_{t, m}$ can be allocated to store the values of the function $v(b)$. To retrieve the information $v(b)$ of a packing, one just needs to compute $ind_m(b)$ which corresponds to the index in the table. This computation is done in $O(m)$.

It is possible to be even more precise than the previous analysis when counting the number of packings. The previous counting method didn't take into account the constraint on packings that $||b||_1 = \sum_{i=1}^m b_i \leq mg$. This will be very useful in Section~\ref{next_item}, to compute efficiently which items can be sent from a given configuration. 
\section{Knowing which item can be sent next}\label{next_item}
The player adversary can only send items such that all the items can fit into $m$ bins of size $g$ - hence we must compute which items can be sent from some state of the game. If an item can be sent, then according to Property~\ref{smaller_extensions}, any smaller item can also be sent. So the problem is equivalent to finding the largest item that can be sent from a game state: if $(I, b)$ is a configuration corresponding to a game state, we want to compute:
$$y_{max}(I) = \max\{y\text{ extension of } I\}$$

However, finding this value is a NP-hard problem (reduction from bin packing) and we have to compute this value at every game state. This makes it critical to optimize its computation.

Previously, this problem was solved by calling an optimization solver \citep{Gabay2017}. This has been improved in \citet{Bohm2017} where a combination of methods was used: lower and upper bounds, hash tables and dynamic programming. The authors claimed their approach was much faster than using a solver. We propose here an even better method than what was proposed so far, also using dynamic programming. 

\begin{notation}
Let $I$ be an instance. We write $\mathcal{B}_I = \{b \text{ packing } | (I, b) \text{ configuration and }b_1\leq g\}$.
\end{notation}
Computing the largest extension of an instance $I$ can be done by considering the set $\mathcal{B}_I$: 
\begin{property}\label{DPlemma}
The biggest extension $y_{max}(I)$ of an instance $I$ is:
$$y_{max}(I) = \max_{b\in \mathcal{B}_I} \{g-b_m\}$$
\end{property}
\begin{proof}
Let us note $I=(x_1, \dots, x_k)$.

For any packing $b\in \mathcal{B}_I$, the value $g-b_m$ is an extension of $I$, since the item $g-b_m$ can be placed in the bin $b_m$. So $$y_{max}(I) \geq \max_{b\in \mathcal{B}_I} \{g-b_m\}$$ 

Since $y_{max}(I)$ is an extension of $I$, then there is a partition of the objects $(x_1, \dots, x_k, y_{max}(I))$ in $m$ bins of capacity $g$. Let us construct the packing $b$ by packing only the items from $I$ according to the partition. Then, by definition $b\in \mathcal{B}_I$. Moreover, $y_{max}(I)$ can be placed in a bin with load $b_i \geq b_m$ of $b$. So $b_m + y_{max}(I) \leq b_i + y_{max} (I)\leq g$, and then:
$$y_{max}(I) \leq g-b_m$$
Since $b\in\mathcal{B}_I$, we have:
$$y_{max}(I) \leq \max_{b\in \mathcal{B}_I} \{g-b_m\}$$
which concludes the proof.
\end{proof}

To compute $y_{max}(I) $ we propose a dynamic programming method to construct the set $\mathcal{B}_I$. Let $I = (x_1, \dots, x_k)$ be an instance. Let us note $I_i = (x_1, \dots, x_i)$ for $i=1, \dots, k$ and $I_0 = \emptyset$ be the empty instance. To compute the largest extension of $I$, the dynamic program constructs the sets of packings $\mathcal{B}_{I_i}$ as follows:
\begin{itemize}
    \item $\mathcal{B}_{I_0} = \{(0, \dots, 0)\}$
    \item For $1\leq i \leq k$, $\mathcal{B}_{I_i} = \{b + x_i\cdot e_j \text{ such that }b\in \mathcal{B}_{I_{i-1}}, \;b_j + x_i \leq g \text{ and } j\in\llbracket1, m\rrbracket\}$
\end{itemize}
In other words, for any packing $b$ in $\mathcal{B}_{I_{i-1}}$, we look at all the packings formed when adding to $b$ the item $x_i$ into a bin. We only keep the packings where all bins have loads smaller than or equal to $g$. The set $\mathcal{B}_{I_{i}}$ can then be constructed by the union of all those packings. This allows the computation of the set $\mathcal{B}_I$. We can then find $\max_{b\in \mathcal{B}_I} \{g-b_m\}$ which is equal to the largest extension of $I$ by Property~\ref{DPlemma}.

This method, however, creates several problems when trying to actually implement it. 
\begin{enumerate}
    \item We do not know the cardinality of $\mathcal{B}_{I_i}$, which prevents us to allocate memory in an efficient manner.
    \item We need a way to prevent the same packing to be added to a set $\mathcal{B}_{I_i}$ several times. 
\end{enumerate}
To deal with the second problem, \citet{BOHM20221} use a hash table. However, we propose another idea. We first tackle the first problem by counting the exact number of packings in $\mathcal{B}_I$, which is the object of what follows; we then show that counting also results in a solution to the second problem. The approach is similar to the counting arguments given in Section~\ref{section_propag_alg}, but is a bit more complex and more accurate, since the constraint $||b||_1 \leq mg$ on packings is now taken into account. 

We first aim to compute the biggest number of packings that can be in the set $\mathcal{B}_I$ for any instance $I$.
\begin{notation}
Let $S\geq 0$, $k\geq 0$ and $n\geq 1$ be integers. We write $P_{S, k, n}$ the set of packings over $n$ bins such that each coordinate is smaller or equal to $k$ and the sum of the coordinates is exactly $S$:
$$P_{S, k, n} = \left\{(b_1, \dots, b_n)\in \mathbb{N}^n\; \text{such that } \begin{cases}k \geq b_1 \geq \dots \geq b_n\\||b||_1 = S \end{cases} \right\}$$
We also write $N_{S, k, n} = |P_{S, k, n}|$.
\end{notation}
For any instance $I$, $b\in \mathcal{B}_I \implies b\in P_{||I||_1, g, m}$. Hence $|\mathcal{B}_I| \leq |P_{||I||_1, g, m}| = N_{||I||_1, g, m}$.
\begin{property}\label{lemma_PSK_rec}
$$N_{S, k, 1} = \begin{cases} 1 \text{ if }k\geq S\\
0 \text{ otherwise}
\end{cases}$$
If $n>1$, then:
$$N_{S, k, n} = \sum_{i = 1}^{\min\{k, S\}} N_{S-i, i, n-1}$$
By convention, if $\min\{k, S\}\leq 0$ then the sum in the right hand side of the equation above is 0. 
\end{property}
\begin{proof}
With only 1 bin, if $k\geq S$, then $P_{S, k, 1} = \{(S)\}$. Otherwise, the set is empty.

If $n>1$:
We look at all the possibilies for the first coordinate of the packing, which corresponds to the variable $i$ in the equation. Then, the sum of the leftover coordinates must be equal to $S-i$, hence the formula. In a more formal way:

$$P_{S, k, n} = \bigcup_{i = 1}^{min\{k, S\}} \{(i, b_2, \dots, b_n) \text{ such that } (b_2, \dots, b_n)\in P_{S-i, i, n-1}\}$$
Since all sets in the union are disjoints, we obtain the expected result.
\end{proof}
We have not found an explicit formula for $N_{S, k, n}$ in the general case. However, Property~\ref{lemma_PSK_rec} still allows us to compute this coefficient rather efficiently. 
We propose to store all the coefficients $N_{S, k, n}$ for $S=0, \ldots, mg$, $k=0, \dots, g$ and $n = 1, \dots, m$ in a table. This only corresponds to $(mg + 1)(g+1)(m) = O(m^2g^2)$ coefficients which isn't large for the bounds we aim to improve: if $m= 10$, $g=200$, this corresponds to roughly $4\times10^6$ which is a very manageable number of integers to store for a computer.

Now that we know an upper bound of the cardinality of $\mathcal{B}_I$, we would like to address the second issue described earlier. We are constructing a set $\mathcal{B}_I$ by adding elements to this set one by one. To prevent duplicates, we want to construct a bijection between possible packings of this set and natural integers. 

\begin{property}\label{smart_bijection}
Let $S\in \mathbb{N}$. The function $f\;:\;P_{S, g, m}\rightarrow \llbracket0, N_{S, g, m}-1\rrbracket$ is constructed as follows:

If $S=0$, $f(b) = 0$.

Otherwise, let $b\in P_{S, g, m}$. Let $r$ be the index of the smallest bin of $b$ that is not empty: $b_r\neq 0$ and either $r = m$ or $b_{r+1} = 0$. Also note $S_i = S - \sum_{j = 1}^{i} b_j$.

Define $$f(b) = \sum_{ i = 0 }^{r-1} N_{S_{i}, b_{i+1} - 1, m-i}$$

Then $f$ is a bijection.

\end{property}

\begin{proof}
Let's define $<_{lex}$ the lexicographical order on packings. 
We will show that $f(b) = |\{b' \in P_{s, g, m} \;|\; b>_{lex}b'\}|$.

If $S = 0$ then only $b=(0, \dots, 0)$ is in the domain of $f$; $f(b) = 0$, and there are no packings smaller with the lexicographical order than $b$. 

Otherwise, for $1 \leq i \leq r-1$, the coefficient $N_{S_{i}, b_{i+1} - 1, m-i}$ corresponds precisely to the number of packings of $P_{S, g, m}$ that have exactly the first $i$ coordinates equal to the first $i$ coordinates of $b$ and that are stricly smaller lexicographically than $b$. Since the first $i$ coordinates are equal to those of $b$, the rest of the coordinates sums to $S_{i}$. Hence, $f$ is a bijection.
\end{proof}

Using this bijection, it is possible to check quickly whether or not a packing was already added to a set $\mathcal{B}_I$. We propose to have a boolean table of size $N_{||I||_1, g, m}$, initialized at False everywhere. Whenever we add a packing $b$ to $\mathcal{B}_I$, we first check whether or not this packing is already in the table by looking at index $f(b)$ of the boolean table. If the result is True, we already added $b$ in the set $\mathcal{B}_I$. Otherwise, we set the value in the boolean table to True, and we add the packing to the set.

During the search for a lower bound, at a certain configuration $(I, b)$, we keep in a table the set $\mathcal{B}_I$. When proposing an item $y$, we can quickly construct the new set $\mathcal{B}_{I\oplus y}$ from the previous $\mathcal{B}_I$. \citet{BOHM20221} had to construct the set $\mathcal{B}_{I\oplus y}$ from scratch everytime, while here we only have to do one iteration of the dynamic program.

\section{Results}\label{section-results}

Results have been computed on a computer with an Intel Core i5-4690 CPU and 15.6GiB of RAM. The code is available online\footnote{\url{https://github.com/lhommeant/Lower-bound-for-online-bin-stretching}}.

With the improvements presented in the previous sections, we managed to prove 3 new lower bounds:
\begin{theorem}
$$\text{For 6, 7 and 8 bins, }\frac{15}{11} \text{ is a lower bound of } s^*$$
\end{theorem}

For 6, 7 and 8 bins, the previous lower bound was $19/14\approx1.3571$ while the new bound is approximately $1.3636$.

We now compare the time needed by our method to prove some bounds.
The search method for lower bounds was improved and is faster in some cases than the implementation proposed by \citet{BOHM20221}. However, it is hard to compare times given by \citet{BOHM20221} with ours, because of the following reasons:

\begin{itemize}
    \item Most times given by \citet{BOHM20221} can't be compared, because they correspond to the search time while knowing some information on the solution (this information is called the monotonicity by \citet{BOHM20221}). This information makes it much easier to find a solution. 
    \item \citet{BOHM20221} used parallel computation.
    \item \citet{BOHM20221} used better hardware.
\end{itemize}

In Table~\ref{lower_bound_results} we compare our approach, for information purposes only, to the times given when this additional 'monotonicity' information isn't known. The row "Bound valid" determines if the corresponding lower bound is valid or not. We add a $(p)$ if the result from \citet{BOHM20221} was obtained with parallel computation. 

\begin{table}
\begin{center}
\begin{tabular}{ |c|c|c|c|c|c| } 
 \hline
 Bins & 3 & 3 & 3 & 3 & 4\\ 
 \hline \hline
 Lower bound & 19/14 & 30/22 & 55/40 & 56/41 & 19/14\\ 
 \hline
  Time from \citet{BOHM20221} & 2s. & 6s. & 3min. 6s. & 30min. & 18s.$(p)$\\
   \hline
   Our time & $<0.1$s & 0.1s & 5s. & 13s. & 0.1s\\ 
 \hline
 Bound valid & Yes & No & No & No & Yes\\
 \hline
\end{tabular}
\end{center}

\begin{center}
\begin{tabular}{ |c|c|c|c|c|c| } 
 \hline
 Bins & 4 & 4 & 6 & 7 & 8\\ 
 \hline \hline
 Lower bound & 30/22 & 34/25 & 15/11 & 15/11 & 15/11 \\ 
 \hline
  Time from \citet{BOHM20221} & 19s.$(p)$ & 48s$(p)$. & - & - & - \\
   \hline
   Our time & $18s$ & 2min. 24 & 14s. & 1min. 52s. & 1h. \\ 
 \hline
 Bound valid & No & No & Yes & Yes & Yes \\
 \hline
\end{tabular}
\end{center}
    \caption{Time comparison between \citet{BOHM20221} and our implementation}
    \label{lower_bound_results}
\end{table}

It can be seen in Table~\ref{lower_bound_results} that while our approach is much faster than the sequential algorithm by \citet{BOHM20221}, this is no longer true when comparing our implementation to their parallel one given by \citet{BOHM20221}. However, our algorithm is still able to be quite competitive. A possibility could be to parallelize our implementation in order to even further improve the bounds.

Lastly, Figure~\ref{improved_bds} shows our contribution to lower bounds in a graphical manner.

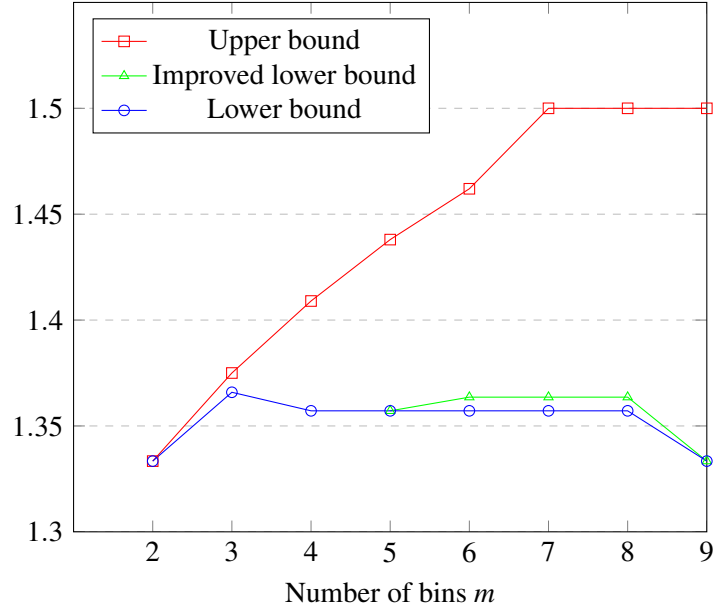
\begin{figure}[htbp]
\centering
\begin{tikzpicture}
\begin{axis}[
    xlabel={Number of bins $m$},
    xmin=1, xmax=9,
    ymin=1.3, ymax=1.55,
    xtick={2, 3, 4, 5, 6, 7, 8, 9},
    ytick={1.3, 1.35, 1.4, 1.45, 1.5},
    legend pos=north west,
    ymajorgrids=true,
    grid style=dashed,
]

\addplot[
    color=red,
    mark=square,
    ]
    coordinates {
    (2,1.3334)(3,1.375)(4,1.409)(5,1.438)(6,1.462)(7,1.5)(8,1.5)(9,1.5)
    };
    \addlegendentry{Upper bound}
\addplot[
    color=green,
    mark=triangle,
    ]
    coordinates {
    (5,1.35714)(6,1.3636)(7,1.3636)(8,1.3636)(9,1.3334)
    };
    \addlegendentry{Improved lower bound}
\addplot[
    color=blue,
    mark=o,
    ]
    coordinates {
    (2,1.3334)(3,1.3659)(4,1.35714)(5,1.35714)(6,1.35714)(7,1.35714)(8,1.35714)(9,1.3334)
    };
    \addlegendentry{Lower bound}

\end{axis}

\end{tikzpicture}
\caption{Improved lower bound}
\label{improved_bds}
\end{figure}

\section{Conclusion and perspectives}

This article presented an online problem and an approach based on computational search to find lower bounds on the performance of online algorithms for the problem.

New ideas were given to improve the method; we propagated information in order to speed up the search, and we improved the speed and efficiency memory-wise of the dynamic programming proposed in previous papers. This resulted in our search being faster than what was done previously, and in three new bounds being found: 15/11 for 6, 7 and 8 bins.

In order to improve the bounds of the online bin stretching problem even further, several things could be envisioned: our implementation could be parallelized, the algorithm to find bounds could be run for longer and on better computers and perhaps stronger pruning ideas similar to the ones presented here could be found.

This computational approach to find bounds could be generalized to many other online problems. One could start by considering problems closely related to online bin stretching: by changing the guarantee that the items fit into $m$ bins into another constraint, by changing the objective into the ratio of performance between an online algorithm and the optimal offline algorithm... Moreover, the proofs generated may give insight into creating general proofs: the lower bound $19/14$ seems to be valid in the general case but remains to be proven, if it is valid.

\appendix

\section{Proof of Property~\ref{Nkm}}\label{Proof_Nkm}
We want to prove that the number $N_{k, n}$ of packings over $n$ bins such that each coordinate is strictly less than $k$ is exactly $N_{k, n} = \binom{k+n-1}{k-1}$.

In order to prove Property~\ref{Nkm}, we need to use the following result, known as the hockey-stick identity:

\begin{lemma}[Hockey-stick identity]
Let $(n, r)\in\mathbb{N}^2$ such that $n\geq r$. Then:
$$\binom{n+1}{n-r} = \sum_{j=0}^{n-r}\binom{j+r}{r} = \sum_{j=0}^{n-r}\binom{j+r}{j}$$
\end{lemma}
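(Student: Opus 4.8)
The plan is to prove the identity by induction on $n$ with $r$ fixed, using Pascal's rule as the main engine; the second of the two equalities is immediate and I would dispatch it first.

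First I would observe that the right-hand equality $\sum_{j=0}^{n-r}\binom{j+r}{r} = \sum_{j=0}^{n-r}\binom{j+r}{j}$ holds term by term, since $\binom{j+r}{r} = \binom{j+r}{(j+r)-r} = \binom{j+r}{j}$ by the symmetry of binomial coefficients. Hence it suffices to establish the single identity $\binom{n+1}{n-r} = \sum_{j=0}^{n-r}\binom{j+r}{r}$. I would also record at the outset the symmetry $\binom{n+1}{n-r} = \binom{n+1}{r+1}$, since it is more convenient to carry the induction against the form $\binom{n+1}{r+1}$.

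For the base case $n = r$ both sides equal $1$: the left side is $\binom{r+1}{0}=1$ and the right side is the single term $\binom{r}{r}=1$. For the inductive step, assuming the identity at $n$, I would split off the last term of the sum at $n+1$:
$$\sum_{j=0}^{n+1-r}\binom{j+r}{r} = \left(\sum_{j=0}^{n-r}\binom{j+r}{r}\right) + \binom{n+1}{r} = \binom{n+1}{r+1} + \binom{n+1}{r},$$
where the induction hypothesis (in the $\binom{n+1}{r+1}$ form) was used, and the new term corresponds to $j=n+1-r$. Pascal's rule then collapses this to $\binom{n+2}{r+1} = \binom{n+2}{n+1-r}$, which is exactly the left side at $n+1$, closing the induction.

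There is no deep obstacle here: the only thing to watch is the bookkeeping between the two equivalent forms $\binom{n+1}{n-r}$ and $\binom{n+1}{r+1}$, and making sure the newly appended term of the extended sum is indexed correctly. As an alternative that avoids induction entirely, I could instead rewrite each summand via Pascal's rule as $\binom{j+r}{r} = \binom{j+r+1}{r+1} - \binom{j+r}{r+1}$ and telescope the sum directly to $\binom{n+1}{r+1} - \binom{r}{r+1} = \binom{n+1}{r+1}$, using $\binom{r}{r+1}=0$; this is the shortest route and I would likely present it as the main argument, keeping the inductive proof in reserve as a more elementary variant.
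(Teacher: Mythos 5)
Your argument is correct: the term-by-term symmetry disposes of the second equality, the base case and the Pascal's-rule induction step both check out (the appended term at $j = n+1-r$ is indeed $\binom{n+1}{r}$, and $\binom{n+1}{r+1}+\binom{n+1}{r}=\binom{n+2}{r+1}=\binom{n+2}{n+1-r}$ as required), and the telescoping variant via $\binom{j+r}{r}=\binom{j+r+1}{r+1}-\binom{j+r}{r+1}$ is also valid. Note, however, that the paper deliberately omits any proof of this classical identity and simply cites an external reference, so there is no in-paper argument to compare against; either of your two routes (the telescoping one being the cleaner) would be a perfectly adequate self-contained replacement for that citation.
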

The proof of this classical identity is not included here, but can be found online\footnote{\url{https://en.wikipedia.org/wiki/Hockey-stick_identity}}.

\begin{proof}
Let us now prove Property~\ref{Nkm}.

First, we remark that:
\begin{itemize}
    \item If $n = 1$ then $N_{k, 1} = k$
    \item Otherwise, $N_{k, n} = \sum_{i=0}^{k-1}N_{i+1,n-1}$
\end{itemize}
The second point is proved as follows:\\
$P_{k,n} = \bigcup_{b_1 = 0}^{k-1} \{(b_1, b_2, \dots, b_n) | b_1\geq b_2 \geq \dots \geq b_n \geq 0\}$\\
All sets in this union are disjoint. Hence:
$$|P_{k,n}| = \sum_{b_1=0}^{k-1} |\{(b_1, b_2, \dots, b_n) | b_1\geq b_2 \geq \dots \geq b_n \geq 0\}| = \sum_{i=0}^{k-1}N_{i+1,n-1}$$

With this remark, we can now prove Property~\ref{Nkm} by induction on $n$. If $n = 1$, then $N_{k, 1} = k = \binom{k+n-1}{k-1}$, so the property is true. 
\\
Let us now suppose that the property is true for $n-1$, and let's prove it for $n$. Then:

$$N_{k, n} = \sum_{i=0}^{k-1}N_{i+1,n-1} = \sum_{i=0}^{k-1} \binom{i+n-1}{i}$$
Applying the hockey stick identity, we obtain that:
$$ \sum_{i=0}^{k-1} \binom{i+n-1}{i} = \binom{k+n-1}{k-1}$$
So $N_{k, n} = \binom{k+n-1}{k-1}$.
\end{proof}
\section{Proof of Property~\ref{bijection_prop}}\label{bijection_prop_proof}
We prove here Property~\ref{bijection_prop}, which claims that the function $ind_m$ defined as $ind_m(b) = \sum_{i=1} ^ {m} N_{b_i, m-i+1}$ is a bijection between $P_{t, m}$ and $\left\llbracket 0, \binom{t+m-1}{t-1}-1\right\rrbracket$.
\begin{proof}
We show that the function $ind_m$ is a bijection by induction on $m$.
Let $t\in \mathbb{N}^*$.\\
If $m=1$, let $b\in P_{t, 1} = (b_1)$ then $ind_1(b_1) = N_{b_1, 1} = b_1$. So $ind_1$ is indeed a bijection between $P_{t, m}$ and $\left\llbracket 0, \binom{t+m-1}{t-1}-1\right\rrbracket$.

Let us suppose that $\forall m' < m$ the function $ind_{m'}$ is a bijection between $P_{t, m'}$ and $\left\llbracket 0, \binom{t+m'-1}{t-1}-1\right\rrbracket$.

Let $b\in P_{t, m}$. Then: $$ind_m(b) = \sum_{i=1} ^ {m} N_{b_i, m-i+1} = N_{b_1, m} + ind_{m-1}((b_2, \dots, b_m))$$

By convention, we write $\binom{n}{-1} = 0$.\\
Yet by induction, $ind_{m-1}$ is a bijection between $P_{b_1+1, m-1}$ and $\left\llbracket 0, \binom{b_1+m-1}{b_1}-1\right\rrbracket$. Let us note 
$$I_{b_1} = \left\llbracket \binom{b_1 + m - 1}{b_1 - 1}, \binom{b_1 + m - 1}{b_1-1}+\binom{b_1 + m - 1}{b_1} - 1\right\rrbracket$$
which simplifies to
$$I_{b_1} = \left\llbracket \binom{b_1 + m - 1}{b_1 - 1}, \binom{b_1 + m }{b_1} - 1\right\rrbracket$$
Then $\begin{cases} P_{b_1+1, m-1} \rightarrow I_{b_1}\\
(b_2, \dots, b_m) \mapsto ind_{m-1}(b_2, \dots, b_m)\end{cases}$ is also a bijection.

To conclude the proof, we just need to prove that all the sets $I_{b_1}$ are disjoints. Let us write $h(b_1) = \binom{b_1 + m - 1}{b_1 - 1}$.

Then $I_{b_1} = \llbracket h(b_1), h(b_1+1) - 1\rrbracket$. To prove that the sets are disjoints we then just need to prove that the function $h$ is strictly increasing:
\begin{itemize}
    \item If $b_1>0$, then $\frac{h(b_1 + 1)}{h(b_1)} = \frac{b_1 + m}{m} > 1$ 
    \item If $b_1 = 0$, then $h(0) = 0 < h(1) = 1$
\end{itemize}
So $h$ is strictly increasing. So the sets $I_{b_1}$ are all distinct.

So the function $\begin{cases} P_{t, m} \rightarrow \left\llbracket 0, \binom{t+m-1}{t-1}-1\right\rrbracket\\
b \mapsto ind_{m}(b)\end{cases}$ is a bijection.
\end{proof}

 \bibliographystyle{elsarticle-num-names.bst} 
\bibliography{refs.bib}

\end{document}